\newtheorem{lemma}{Lemma}{\bfseries}{\itshape}
{\itshape}{\rmfamily}
{\itshape}{\rmfamily}
{\itshape}{\rmfamily}
{\bfseries}{\itshape}
{\itshape}{\rmfamily}
{\itshape}{\rmfamily}
{\itshape}{\rmfamily}
\newtcolorbox{isaframe}[1][]
   { blanker, %breakable, 
    left=3mm, right=3mm, top=1mm, bottom=1mm,
     borderline west={1pt}{0pt}{blue},
     before upper=\setlength{\parindent}{0pt},
     % before upper=\indent,
     % parskip,
     parbox=true, #1}
\crefname{theorem}{theorem}{theorems}
\crefname{corollary}{corollary}{corollaries}
\crefname{example}{example}{examples}
\crefname{lemma}{lemma}{lemmas}
\crefname{proposition}{proposition}{propositions}
\crefname{definition}{definition}{definitions}
\crefname{observation}{observation}{observations}
\newcommand{\isaterm}[1]{\texttt{{#1}}}
\newcommand\rev[1]{%
\ensuremath{\mathop\mathrm{rev}%
\ifx&#1&%
  % #1 is empty
\else
  % #1 is nonempty
  {\left( #1 \right)}
\fi
}}
\DeclareMathOperator{\id}{id}
\newcommand{\alist}{\isaterm{{\isacharprime}a\ list}}
\newcommand{\blist}{\isaterm{{\isacharprime}b\ list}}
\newcommand{\alistlist}{\isaterm{{\isacharprime}a\ list list}}
\newcommand{\Nil}{\isaterm{Nil}}
\newcommand{\Cons}{\isaterm{Cons}}
\title{Producing symmetrical facts for lists induced by the list reversal mapping in Isabelle/HOL}
\author{
Martin Raška \\ Charles University \\ Czech Republic
\and
Štěpán Starosta \\ Czech Technical University in Prague \\ Czech Republic \\ stepan.starosta@fit.cvut.cz
}
\institution{}
\begin{document}
\maketitle

\begin{abstract}
Many facts possess symmetrical counterparts that often require a separate formal proof, depending on the nature of the involved symmetry.
We introduce a method in Isabelle/HOL which produces such a symmetrical fact for the list datatype and the symmetry induced by the list reversal mapping.
The method is implemented as an attribute and its result is based on user-declared symmetry rules.
Besides general rules, we provide rules that are aimed to be applied in the domain of Combinatorics on Words.
\end{abstract}
\vskip 32pt

\section{Introduction}

While formalizing a piece of mathematical knowledge, one probably hopes that some part of the tedious work will be done by the machine. 
One such mechanical tasks are proofs that follow ``by symmetry'' which can be seen as a variation of ``without loss of generality'' \cite{wlog}.
One such ``by symmetry'' usually stands for a proper description of the symmetry involved and the procedure of how lemmas involving the symmetry should be used to obtain the symmetrical claim.

In this article, we exhibit a partial, yet quite useful, solution to ``by symmetry'' in the case of lists and the reversal mapping in the proof assistant Isabelle/HOL \cite{IsabelleHOLBook}.
The reversal, or mirror mapping, is the mapping reversing the order of elements in a list.
This mapping interconnects many pairs of definitions over lists in the spirit of the following duality: the list $p$ is a prefix of the list $w$ if and only if the reversal of $p$ is a suffix of the reversal of $w$.
We situate this solution in the context of Combinatorics on words, a mathematical domain which studies words, i.e., lists and their various properties including equations on words.

First, we give a short overview of mathematical context along with examples of the symmetry in question.
In \Cref{sec:solution}, we shortly describe possible approaches to the solution and then we describe our solution which is part of the ongoing project of formalization of Combinatorics on Words \cite{CoW_gitlab}.
We conclude by describing the limits of the current solution in \Cref{sec:limits} and conclude by final remarks in \Cref{sec:conclusion}

\section{Mathematical context and examples of the symmetry} \label{sec:context_and_examples}

We work with \emph{words}, which are finite sequences $(a_i)_{i=0}^{n}$ with $a_i \in A$ with $A$ usually being a finite set.
The set of all words over $A$ is denoted $A^*$ (where $^*$ is the Kleene star).
The \emph{reversal mapping}, denoted $\rev{}$, is a mapping $A^* \to A^*$ which maps the word $w = (a_i)_{i=0}^{n}$ to the word $\rev{w} = (a_{n-i})_{i=0}^{n}$, or simply put, it reads the letters of the word in the reverse order.
The reversal mapping is an \emph{involutive antimorphism} with respect to the operation of concatenation of two words, that is, $\rev{} \circ \rev{} = \id$ and $\rev{v \cdot w} = \rev{w} \cdot \rev{v}$ where $\cdot$ is the binary operation of concatenation.
It follows that \rev{} is also a bijection.

In our ongoing project \cite{CoW_gitlab} of formalization of Combinatorics on Words, we formalize many elementary preparatory lemmas dealing with a handful of notions.
Many of these notions have a symmetrical counterpart, and many facts are symmetrical, and their proof is just copy and paste of the proof of the original lemma.
We continue with examples that exhibit this symmetry.

\subsection{Example 1} \label{sec:ex1}

If a word $w$ can be written as a concatenation of two words $p$ and $s$, i.e., $w = p \cdot s$, we say that $p$ is a \emph{prefix} of $w$ and $s$ is a \emph{suffix}.
An elementary example of a symmetrical pair of claims involving prefix and suffix is the following.

\begin{lemma} \label{pair1}
If $p$ is a prefix of $v$, then $p$ a prefix of $v \cdot w$.
\end{lemma}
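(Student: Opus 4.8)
The plan is to unfold the definition of \emph{prefix} and then reduce everything to associativity of concatenation. By definition, saying that $p$ is a prefix of $v$ means precisely that there exists a word $s$ with $v = p \cdot s$. So the first step is to fix such a witness $s$ obtained from the hypothesis.

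Next, I would compute $v \cdot w$ directly. Substituting $v = p \cdot s$ gives $v \cdot w = (p \cdot s) \cdot w$, and by associativity of concatenation this equals $p \cdot (s \cdot w)$. This exhibits $v \cdot w$ as $p$ concatenated with the word $s \cdot w$, which is exactly the form required to conclude that $p$ is a prefix of $v \cdot w$, with the new suffix being $s \cdot w$.

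Finally, I would close the goal by supplying $s \cdot w$ as the explicit witness in the definition of prefix. In Isabelle/HOL this amounts to unfolding the prefix predicate to its existential form, introducing the witness produced by the hypothesis, and discharging the remaining equality by the single rewrite step that reassociates the concatenation.

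I do not expect any real obstacle here: the statement is essentially a one-line consequence of associativity, and the only thing to be careful about is the direction of the reassociation. The lemma is deliberately elementary, and its interest lies not in the difficulty of the proof but in the fact that it has a symmetrical companion (the analogous statement for suffixes and left concatenation), which is what the reversal-symmetry method developed later is meant to produce automatically.
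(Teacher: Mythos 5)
Your proof is correct and follows essentially the same route as the paper's: obtain the witness $s$ with $v = p \cdot s$, and exhibit $s \cdot w$ as the witness for $v \cdot w = p \cdot (s \cdot w)$. The only difference is that you make the associativity step explicit, which the paper leaves implicit by writing $p \cdot s \cdot w$ without parentheses.
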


\begin{proof}
If $p$ is a prefix of $v$, there exists a word $s$ such that $v = p \cdot s$.
Hence, $v\cdot w = p\cdot s\cdot w$, and $p$ is a prefix of $v \cdot w$.
\end{proof}

By the symmetrical counterpart of \Cref{pair1} we mean the following claim.

\begin{lemma}\label{pair2}
If $s$ is suffix of $v$, then $s$ is a suffix of $w \cdot v$.
\end{lemma}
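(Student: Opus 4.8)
The plan is to offer two routes and favor the one this paper is clearly building toward. The \emph{direct} route simply mirrors the proof of \Cref{pair1} with the roles of the two factors swapped. The \emph{symmetric} route instead derives the statement from \Cref{pair1} by transporting it through the reversal mapping, exploiting the duality that $p$ is a prefix of $w$ exactly when $\rev{p}$ is a suffix of $\rev{w}$, together with the antimorphism law $\rev{v \cdot w} = \rev{w} \cdot \rev{v}$.

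For the direct route, I would unfold the definition of suffix: if $s$ is a suffix of $v$, there is a word $p$ with $v = p \cdot s$. Prepending $w$ gives $w \cdot v = (w \cdot p) \cdot s$, which exhibits $s$ as a suffix of $w \cdot v$ with witness $w \cdot p$. This is literally the copy-and-paste dual of the argument for \Cref{pair1}, and I expect no obstacle here; it is exactly the kind of redundant proof the paper seeks to eliminate.

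For the symmetric route, the one in the spirit of this work, I would proceed in four steps. First, rewrite the hypothesis ``$s$ is a suffix of $v$'' as ``$\rev{s}$ is a prefix of $\rev{v}$'' using the duality. Second, apply \Cref{pair1} to the reversed words, taking its $p$ to be $\rev{s}$, its $v$ to be $\rev{v}$, and its $w$ to be $\rev{w}$; this yields that $\rev{s}$ is a prefix of $\rev{v} \cdot \rev{w}$. Third, collapse the concatenation with the antimorphism law, $\rev{v} \cdot \rev{w} = \rev{w \cdot v}$, so that $\rev{s}$ is a prefix of $\rev{w \cdot v}$. Fourth, translate back through the duality to conclude that $s$ is a suffix of $w \cdot v$.

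The main obstacle lies entirely in the symmetric route, and it is bookkeeping rather than mathematical depth. The delicate points are getting the \emph{orientation} right when instantiating \Cref{pair1} (since the antimorphism reverses the order of the factors, the word appended on the right in \Cref{pair1} must correspond to the word prepended on the left here) and applying the involution $\rev{} \circ \rev{} = \id$ wherever a doubly reversed term appears so that the shapes match the target exactly. In a formal setting this is precisely the rewriting discipline that the attribute described in \Cref{sec:solution} is meant to automate.
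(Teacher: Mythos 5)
Your proposal is correct and matches the paper's own treatment: your direct route is exactly the paper's ``from scratch'' proof (witness $w \cdot p$), and your symmetric route reproduces the paper's elaborated ``by symmetry'' proof via \Cref{p_to_s}, \Cref{s_to_p}, \Cref{pair1}, the antimorphism law, and the involution. The only difference is cosmetic: you collapse $\rev{v} \cdot \rev{w} = \rev{w \cdot v}$ before translating back through the duality, whereas the paper translates back first and then simplifies the double reversals, which is the same argument reordered.
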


Its proof can be done as the presented proof of \Cref{pair1}, however, stating in follows ``by symmetry'' from \Cref{pair1} would be no exception in literature.

In order to formally exploit the symmetry in a full proof, we have to make the intended symmetry between a prefix and a suffix explicit:
\begin{lemma}\label{p_to_s}
The word $p$ is a prefix of $w$ if and only if $\rev{p}$ is a suffix of $\rev{w}$.
\end{lemma}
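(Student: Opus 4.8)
The plan is to unfold the definitions of prefix and suffix into explicit concatenation witnesses and then apply the two structural properties of $\rev{}$ established above, namely that it is an involution ($\rev{\rev{w}} = w$) and an antimorphism ($\rev{v \cdot w} = \rev{w} \cdot \rev{v}$). Since the statement is a biconditional, I would split it into the two implications, proving each by rewriting a decomposition of $w$ (resp.\ $\rev{w}$) under $\rev{}$.

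For the forward direction, I would assume $p$ is a prefix of $w$, so that there is a word $s$ with $w = p \cdot s$. Applying $\rev{}$ to both sides and using the antimorphism law gives $\rev{w} = \rev{p \cdot s} = \rev{s} \cdot \rev{p}$. The right-hand side exhibits $\rev{p}$ as a suffix of $\rev{w}$, with $\rev{s}$ as the complementary prefix, which is exactly the required conclusion.

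For the converse, I would assume $\rev{p}$ is a suffix of $\rev{w}$, so that there is a word $q$ with $\rev{w} = q \cdot \rev{p}$. Applying $\rev{}$ to both sides and using first the antimorphism and then the involution yields $w = \rev{\rev{w}} = \rev{q \cdot \rev{p}} = \rev{\rev{p}} \cdot \rev{q} = p \cdot \rev{q}$, so $p$ is a prefix of $w$ with witness $\rev{q}$. Alternatively, the converse can be obtained without repeating the argument: one proves the analogous statement that a suffix maps to a prefix and instantiates it at $\rev{p}$ and $\rev{w}$, whereupon $\rev{\rev{p}} = p$ and $\rev{\rev{w}} = w$ collapse it to the desired claim.

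I do not expect any genuine obstacle here. The only point requiring care is the order reversal in the antimorphism law, which is precisely the mechanism that converts a prefix decomposition into a suffix decomposition, together with the appeal to involution to cancel the double reversal in the backward direction. In the Isabelle/HOL formalization the witnesses are dispatched by existential introduction and elimination, so that the two rewrites by $\rev{}$ remain the only nontrivial steps.
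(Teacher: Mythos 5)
Your proof is correct. Note that the paper itself states this lemma (\Cref{p_to_s}) without any proof at all --- it is introduced only as an explicit symmetry rule, corresponding to ready-made facts in Isabelle's HOL-Library.Sublist (e.g.\ \texttt{suffix\_to\_prefix}) --- so there is no proof in the paper to diverge from. Your argument, unfolding the concatenation witnesses and applying exactly the two properties of $\rev{}$ the paper records (the antimorphism law $\rev{v \cdot w} = \rev{w} \cdot \rev{v}$ and the involution $\rev{\rev{w}} = w$), is the standard elementary proof and is fully consistent with the paper's framework; both of your suggested routes for the converse (direct rewriting, or instantiating the dual statement at $\rev{p}$, $\rev{w}$ and collapsing double reversals) are sound.
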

% and
% s

A full proof of \Cref{pair2}, by symmetry, is as follows.

\begin{proof}[Proof of \Cref{pair2}]
Fix $w$ and assume that $s$ is a suffix of $v$.
Let $s'$, $v'$, and $w'$ be the words such that
\[
s' = \rev{s}, \quad v' = \rev{v} \quad \text { and } \quad w' = \rev{w}.
\]
As the reversal is an involution, it follows that
\[
s = \rev{s'}, \quad v = \rev{v'} \quad \text { and } \quad w = \rev{w'}.
\]
As $s$ is a suffix of $v$, the word $\rev{s'}$ is a suffix of $\rev{v'}$.
By \Cref{p_to_s}, $s'$ is a prefix of $v'$.
Using \Cref{pair1}, $s'$ is a prefix of $v' \cdot w'$.
Again, by \Cref{p_to_s}, $\rev{s'}$ is a suffix of $\rev{v' \cdot w'}$.
Since
\[
\rev{v' \cdot w'} = \rev{w'}\cdot \rev{v'} = w\cdot v,
\]
we conclude that $s$ is a suffix of $w\cdot v$.
\end{proof}

\subsection{Example 2} \label{sec:ex2}

Since the next examples are in the framework of Isabelle/HOL, we first recall our setting.
A word is represented by the datatype of list, which is is specified via 2 constructors: \Nil{} (denoted \isaterm{{\isacharbrackleft}{\kern0pt}{\isacharbrackright}{\kern0pt}}), the empty list/word, and \Cons{} (denoted \isaterm{{\isacharhash}}), the recursive constructor allowing to add an element to the list at its beginning.
The reversal mapping is represented by the function \isaterm{rev}:

\begin{isaframe}
\isacommand{primrec}\isamarkupfalse%
\ rev\ {\isacharcolon}{\kern0pt}{\isacharcolon}{\kern0pt}\ {\isachardoublequoteopen}{\isacharprime}{\kern0pt}a\ list\ {\isasymRightarrow}\ {\isacharprime}{\kern0pt}a\ list{\isachardoublequoteclose}\ \isakeyword{where}\isanewline
{\isachardoublequoteopen}rev\ {\isacharbrackleft}{\kern0pt}{\isacharbrackright}{\kern0pt}\ {\isacharequal}{\kern0pt}\ {\isacharbrackleft}{\kern0pt}{\isacharbrackright}{\kern0pt}{\isachardoublequoteclose}\ {\isacharbar}{\kern0pt}\isanewline
{\isachardoublequoteopen}rev\ {\isacharparenleft}{\kern0pt}x\ {\isacharhash}{\kern0pt}\ xs{\isacharparenright}{\kern0pt}\ {\isacharequal}{\kern0pt}\ rev\ xs\ {\isacharat}{\kern0pt}\ {\isacharbrackleft}{\kern0pt}x{\isacharbrackright}{\kern0pt}{\isachardoublequoteclose}%
\end{isaframe}

with \isaterm{{\isacharat}} being the notation for list \isaterm{append}, i.e., concatenation of two words.

The predicates for prefix and suffix are already part of the Isabelle distribution in the theory HOL-Library.Sublist:

\begin{isaframe}
\isacommand{definition}\isamarkupfalse%
\ prefix\ {\isacharcolon}{\kern0pt}{\isacharcolon}{\kern0pt}\ {\isachardoublequoteopen}{\isacharprime}{\kern0pt}a\ list\ {\isasymRightarrow}\ {\isacharprime}{\kern0pt}a\ list\ {\isasymRightarrow}\ bool{\isachardoublequoteclose} %\isanewline
\ \ \isakeyword{where}\ {\isachardoublequoteopen}prefix\ xs\ ys\ {\isasymlongleftrightarrow}\ {\isacharparenleft}{\kern0pt}{\isasymexists}zs{\isachardot}{\kern0pt}\ ys\ {\isacharequal}{\kern0pt}\ xs\ {\isacharat}{\kern0pt}\ zs{\isacharparenright}{\kern0pt}{\isachardoublequoteclose}\isanewline
\isanewline
\isacommand{definition}\isamarkupfalse%
\ suffix\ {\isacharcolon}{\kern0pt}{\isacharcolon}{\kern0pt}\ {\isachardoublequoteopen}{\isacharprime}{\kern0pt}a\ list\ {\isasymRightarrow}\ {\isacharprime}{\kern0pt}a\ list\ {\isasymRightarrow}\ bool{\isachardoublequoteclose} %\isanewline
\ \ \isakeyword{where}\ {\isachardoublequoteopen}suffix\ xs\ ys\ {\isacharequal}{\kern0pt}\ {\isacharparenleft}{\kern0pt}{\isasymexists}zs{\isachardot}{\kern0pt}\ ys\ {\isacharequal}{\kern0pt}\ zs\ {\isacharat}{\kern0pt}\ xs{\isacharparenright}{\kern0pt}{\isachardoublequoteclose}
\end{isaframe}

% The same theory already provides \Cref{pair1,pair2} respectively as

% \begin{isaframe}
% \isacommand{lemma}\isamarkupfalse%
% \ prefix{\isacharunderscore}{\kern0pt}prefix{\isacharcolon}{\kern0pt}\ {\isachardoublequoteopen}prefix\ xs\ ys\ {\isasymLongrightarrow}\ prefix\ xs\ {\isacharparenleft}{\kern0pt}ys\ {\isacharat}{\kern0pt}\ zs{\isacharparenright}{\kern0pt}{\isachardoublequoteclose}
% % \end{isaframe}

% % and \Cref{pair2} as
% \isanewline
% % \begin{isaframe}
% \isacommand{lemma}\isamarkupfalse%
% \ suffix{\isacharunderscore}{\kern0pt}appendI{\isacharcolon}{\kern0pt}\ {\isachardoublequoteopen}suffix\ xs\ ys\ {\isasymLongrightarrow}\ suffix\ xs\ {\isacharparenleft}{\kern0pt}zs\ {\isacharat}{\kern0pt}\ ys{\isacharparenright}{\kern0pt}{\isachardoublequoteclose}
% \end{isaframe}

The second example is constituted by the pair of symmetric definitions of the first and the last letter of a word, i.e., element of a list.
In Isabelle/HOL, the first element of a list is its head, realized as one of two selectors, named \isaterm{hd}, of the list constructor \Cons{}.
The last letter is the recursive function \isaterm{last}:

\begin{isaframe}
\isacommand{primrec}\isamarkupfalse%
\ {\isacharparenleft}{\kern0pt}nonexhaustive{\isacharparenright}{\kern0pt}\ last\ {\isacharcolon}{\kern0pt}{\isacharcolon}{\kern0pt}\ {\isachardoublequoteopen}{\isacharprime}{\kern0pt}a\ list\ {\isasymRightarrow}\ {\isacharprime}{\kern0pt}a{\isachardoublequoteclose}\ \isakeyword{where}\isanewline
\ \ {\isachardoublequoteopen}last\ {\isacharparenleft}{\kern0pt}x\ {\isacharhash}{\kern0pt}\ xs{\isacharparenright}{\kern0pt}\ {\isacharequal}{\kern0pt}\ {\isacharparenleft}{\kern0pt}if\ xs\ {\isacharequal}{\kern0pt}\ {\isacharbrackleft}{\kern0pt}{\isacharbrackright}{\kern0pt}\ then\ x\ else\ last\ xs{\isacharparenright}{\kern0pt}{\isachardoublequoteclose}%
\end{isaframe}

\noindent given in the main theory List.
To obtain a simple enough symmetry rule for {\tt hd} and {\tt last}, it suffices to notice
% \footnote{This slight improvement has been reported to the developers of Isabelle/HOL and will be made part of Isabelle 2022.}
that they behave the same way on the empty list.
% It may not be advisable to rely upon mappings outside their explicit domain of definition, however, the mappings in question are hardly to change in the future.

\begin{isaframe}
\isacommand{lemma}
\ hd{\isacharunderscore}{\kern0pt}last{\isacharunderscore}{\kern0pt}Nil{\isacharcolon}{\kern0pt}\ {\isachardoublequoteopen}hd\ {\isacharbrackleft}{\kern0pt}{\isacharbrackright}{\kern0pt}\ {\isacharequal}{\kern0pt}\ last\ {\isacharbrackleft}{\kern0pt}{\isacharbrackright}{\kern0pt}{\isachardoublequoteclose} %\isanewline
\isacommand{unfolding}\isamarkupfalse%
\ hd{\isacharunderscore}{\kern0pt}def\ last{\isacharunderscore}{\kern0pt}def\ \isacommand{by}\isamarkupfalse%
\ simp%
\isanewline
\isanewline
\isacommand{lemma}\isamarkupfalse%
\ hd{\isacharunderscore}{\kern0pt}rev{\isacharunderscore}{\kern0pt}last{\isacharcolon}{\kern0pt}\ {\isachardoublequoteopen}hd{\isacharparenleft}{\kern0pt}rev\ xs{\isacharparenright}{\kern0pt}\ {\isacharequal}{\kern0pt}\ last\ xs{\isachardoublequoteclose} %\isanewline
\isacommand{by}\isamarkupfalse%
\ {\isacharparenleft}{\kern0pt}induct\ xs{\isacharcomma}{\kern0pt}\ simp\ add{\isacharcolon}{\kern0pt}\ hd{\isacharunderscore}{\kern0pt}last{\isacharunderscore}{\kern0pt}Nil{\isacharcomma}{\kern0pt}\ simp{\isacharparenright}{\kern0pt}%
\end{isaframe}

The pair of symmetrical claims is the following.

\begin{isaframe}
\isacommand{lemma}\isamarkupfalse%
\ example{\isadigit{2}}{\isacharcolon}{\kern0pt}\ {\isachardoublequoteopen}u\ {\isasymnoteq}\ {\isacharbrackleft}{\kern0pt}{\isacharbrackright}{\kern0pt}\ {\isasymLongrightarrow}\ \ prefix\ u\ v\ {\isasymLongrightarrow}\ hd\ u\ {\isacharequal}{\kern0pt}\ hd\ v{\isachardoublequoteclose}
\isanewline
\isanewline
\isacommand{lemma}\isamarkupfalse%
\ example2\_sym{\kern0pt}{\isacharcolon}{\kern0pt}\ {\isachardoublequoteopen}u\ {\isasymnoteq}\ {\isacharbrackleft}{\kern0pt}{\isacharbrackright}{\kern0pt}\ {\isasymLongrightarrow}\ \ suffix\ u\ v\ {\isasymLongrightarrow}\ last\ u\ {\isacharequal}{\kern0pt}\ last\ v{\isachardoublequoteclose}
\end{isaframe}

The goal is to obtain example2\_sym from example{\isadigit{2}} by symmetry.
We proceed analogously to the proof of \Cref{pair2} above using standard methods in Isabelle/HOL:

\begin{isaframe}
example{\isadigit{2}}{\isacharbrackleft}{\kern0pt}of\ {\isachardoublequoteopen}rev\ u{\isachardoublequoteclose}\ {\isachardoublequoteopen}rev\ v{\isachardoublequoteclose}{\isacharcomma}{\kern0pt}\ unfolded\ rev{\isacharunderscore}{\kern0pt}is{\isacharunderscore}{\kern0pt}Nil{\isacharunderscore}{\kern0pt}conv\ suffix{\isacharunderscore}{\kern0pt}to{\isacharunderscore}{\kern0pt}prefix{\isacharbrackleft}{\kern0pt}symmetric{\isacharbrackright}{\kern0pt}\ hd{\isacharunderscore}{\kern0pt}rev{\isacharunderscore}{\kern0pt}last{\isacharbrackright}{\kern0pt}
\end{isaframe}
where \isaterm{rev{\isacharunderscore}{\kern0pt}is{\isacharunderscore}{\kern0pt}Nil{\isacharunderscore}{\kern0pt}conv}
is
\isaterm{{\isacharparenleft}{\kern0pt}rev\ xs\ {\isacharequal}{\kern0pt}\ {\isacharbrackleft}{\kern0pt}{\isacharbrackright}{\kern0pt}{\isacharparenright}{\kern0pt}\ {\isacharequal}{\kern0pt}\ {\isacharparenleft}{\kern0pt}xs\ {\isacharequal}{\kern0pt}\ {\isacharbrackleft}{\kern0pt}{\isacharbrackright}{\kern0pt}{\isacharparenright}{\kern0pt}}
and \isaterm{suffix{\isacharunderscore}{\kern0pt}to{\isacharunderscore}{\kern0pt}prefix{\isacharbrackleft}{\kern0pt}symmetric{\isacharbrackright}{\kern0pt}} is \isaterm{prefix\ {\isacharparenleft}{\kern0pt}rev\ xs{\isacharparenright}{\kern0pt}\ {\isacharparenleft}{\kern0pt}rev\ ys{\isacharparenright}{\kern0pt}\ {\isacharequal}{\kern0pt}\ suffix\ xs\ ys}.
That is, we instantiate every variable of example{\isadigit{2}} by its reversal to obtain
\begin{isaframe}
{\isachardoublequoteopen}rev\ u\ {\isasymnoteq}\ {\isacharbrackleft}{\kern0pt}{\isacharbrackright}{\kern0pt}\ {\isasymLongrightarrow}\ prefix\ {\isacharparenleft}{\kern0pt}rev\ u{\isacharparenright}{\kern0pt}\ {\isacharparenleft}{\kern0pt}rev\ v{\isacharparenright}{\kern0pt}\ {\isasymLongrightarrow}\ hd\ {\isacharparenleft}{\kern0pt}rev\ u{\isacharparenright}{\kern0pt}\ {\isacharequal}{\kern0pt}\ hd\ {\isacharparenleft}{\kern0pt}rev\ v{\isacharparenright}{\kern0pt}{\isachardoublequoteclose},
\end{isaframe}

and then rewrite the terms using appropriate symmetry rules, 
% namely rev{\isacharunderscore}{\kern0pt}is{\isacharunderscore}{\kern0pt}Nil{\isacharunderscore}{\kern0pt}conv\ suffix{\isacharunderscore}{\kern0pt}to{\isacharunderscore}{\kern0pt}prefix{\isacharbrackleft}{\kern0pt}symmetric{\isacharbrackright}{\kern0pt}\ hd{\isacharunderscore}{\kern0pt}rev{\isacharunderscore}{\kern0pt}last, 
via the unfolded attribute (which is analogous to what was done in the proof of \Cref{pair2} above).
We end up with example2\_sym and the proof by symmetry is done.

\subsection{Example 3} \label{sec:ex3}

The next example is the following pair of symmetric facts.

\begin{isaframe}
\isacommand{lemma}\isamarkupfalse%
\ example{\isadigit{3}}{\isacharcolon}{\kern0pt}\ {\isachardoublequoteopen}prefix\ u\ {\isacharparenleft}{\kern0pt}p\ {\isacharat}{\kern0pt}\ w\ {\isacharat}{\kern0pt}\ q{\isacharparenright}{\kern0pt}\ {\isasymLongrightarrow}  \ length\ p\ {\isasymle}\ length\ u\ \ {\isasymLongrightarrow}\ length\ u\ {\isasymle}\ length\ {\isacharparenleft}{\kern0pt}p\ {\isacharat}{\kern0pt}\ w{\isacharparenright}{\kern0pt}\ {\isasymLongrightarrow} \isanewline
\ {\isasymexists}r{\isachardot}{\kern0pt}\ u\ {\isacharequal}{\kern0pt}\ p\ {\isacharat}{\kern0pt}\ r\ {\isasymand}\ prefix\ r\ w{\isachardoublequoteclose}\isanewline
\isanewline
\isacommand{lemma}\isamarkupfalse%
\ example3\_sym{\kern0pt}{\isacharcolon}{\kern0pt}\ {\isachardoublequoteopen}suffix\ u\ {\isacharparenleft}{\kern0pt}p\ {\isacharat}{\kern0pt}\ w\ {\isacharat}{\kern0pt}\ q{\isacharparenright}{\kern0pt}\ {\isasymLongrightarrow}  \ length\ q\ {\isasymle}\ length\ u\ {\isasymLongrightarrow}\ length\ u\ {\isasymle}\ length\ {\isacharparenleft}{\kern0pt}w\ {\isacharat}{\kern0pt}\ q{\isacharparenright}{\kern0pt}\ {\isasymLongrightarrow}\isanewline
\ {\isasymexists}r{\isachardot}{\kern0pt}\ u\ {\isacharequal}{\kern0pt}\ r\ {\isacharat}{\kern0pt}\ q\ {\isasymand}\ suffix\ r\ w{\isachardoublequoteclose}\isanewline
\end{isaframe}

Applying the same strategy as for example2 fails, since trying to obtain example3\_sym from
\begin{isaframe}
example{\isadigit{3}}{\isacharbrackleft}{\kern0pt}of\ {\isachardoublequoteopen}rev\ u{\isachardoublequoteclose}\ {\isachardoublequoteopen}rev\ p{\isachardoublequoteclose}\ {\isachardoublequoteopen}rev\ w{\isachardoublequoteclose}\ {\isachardoublequoteopen}rev\ q{\isachardoublequoteclose}{\isacharcomma}{\kern0pt}\ unfolded\ symmetry{\isacharunderscore}{\kern0pt}rules{\isacharbrackright}{\kern0pt},
\end{isaframe}

\noindent where \isaterm{symmetry{\isacharunderscore}{\kern0pt}rules} is a list of appropriate symmetry rules, leaves us with

\begin{isaframe}
\ {\isachardoublequoteopen}suffix\ u\ {\isacharparenleft}{\kern0pt}{\isacharparenleft}{\kern0pt}q\ {\isacharat}{\kern0pt}\ w{\isacharparenright}{\kern0pt}\ {\isacharat}{\kern0pt}\ p{\isacharparenright}{\kern0pt}\ {\isasymLongrightarrow}
\ \ length\ p\ {\isasymle}\ length\ u\ {\isasymLongrightarrow} \isanewline\ length\ u\ {\isasymle}\ length\ {\isacharparenleft}{\kern0pt}w\ {\isacharat}{\kern0pt}\ p{\isacharparenright}{\kern0pt}\ {\isasymLongrightarrow}\ {\isasymexists}r{\isachardot}{\kern0pt}\ rev\ u\ {\isacharequal}{\kern0pt}\ rev\ p\ {\isacharat}{\kern0pt}\ r\ {\isasymand}\ prefix\ r\ {\isacharparenleft}{\kern0pt}rev\ w{\isacharparenright}{\kern0pt}{\isachardoublequoteclose},
\end{isaframe}

which is not yet in the form of example3\_sym.
This is not unexpected, the claim contains a bound variable \isaterm{r}. % which cannot be simply instantiated to \rev{r}.
To finish the conversion, it suffices to realize that $(\exists x. \ P(x)) \leftrightarrow (\exists x. \ P(\rev{} x))$ holds.
Thus, we may replace the last two occurrences of \isaterm{r} with \isaterm{rev r}, and apply appropriate symmetric rules.

The next section addresses our realization of automatic production of symmetric rules, preceded by a discussion on the use of existing tools.

\section{Automated production of symmetrical claims} \label{sec:solution}

Before describing our solution to the automation of producing symmetrical claims, we discuss if and how might our task be achieved using tools for theorem reuse available in Isabelle/HOL.
We have not found any ready made tool in Isabelle/HOL that could achieve our objectives.
We shall briefly discuss two existing tools that achieve a similar task, namely reusing of a theory in a homomorphic setting.

The first tool are locales.
Locale is a mechanism for abstraction via interpretation and locale expressions \cite{Ballarin2010,Ballarin2014}.
We could see the ``by symmetry'' argument as two instantiations of the same claim: first in lists, and second in reversed lists.
It would require to prove all claims about lists in an abstract setting, and then apply it to lists and reversed lists.
The abstract setting would mean some kind of ``axiomatic theory of lists'', that is, of free monoids, as in \cite{HolubVeroff}.
While this may be the correct idea mathematically, we do not see how to naturally recreate it in Isabelle/HOL using locales.

The second tool is the infrastructure of transfer \cite{lifting_and_transfer,phd_kuncar}. 
Its main purpose is to transfer facts between two datatypes, e.g., from natural integers to integers, via user specified transfer rules.
Although, in principle, it should be possible to use this powerful tool, we encountered several problems using it and we did not find a way how to employ it for our purposes without producing undesired limitations.
For example, it is not clear how to specify whether in the case of transferring a fact containing $\alistlist$ the transfer rule should be applied to $\alist$ or $\alistlist = \blist$.

Since it seems from the above discussion that there is no direct way how to achieve the desired automation of the symmetry, we propose a ``lightweight'' solution which closely mimics the simple reproving of each individual claim ``on the fly'' as indicated by the examples in~\Cref{sec:context_and_examples}.
Our solution is very simple but at the same time it proves to be very practical and sufficiently versatile.

It is created as a single attribute called ``reversed''.
The symmetry rules are collected as a list of theorems called ``reversal\_rule'', i.e., a user can add and remove them any time.
By default, rules are required to eliminate reversal images, thus the reversal images are supposed to be on the left side of the equalities serving as rules.
For instance, the symmetry rule \Cref{p_to_s} is stored in this form
\begin{isaframe}
% \isacommand{lemma}\isamarkupfalse%
% \ suffix{\isacharunderscore}{\kern0pt}to{\isacharunderscore}{\kern0pt}prefix
% {\isacharbrackleft}{\kern0pt}reversal{\isacharunderscore}{\kern0pt}rule{\isacharbrackright}{\kern0pt}{\isacharcolon}{\kern0pt}
\ {\isachardoublequoteopen}suffix\ {\isacharparenleft}{\kern0pt}rev\ p{\isacharparenright}{\kern0pt}\ {\isacharparenleft}{\kern0pt}rev\ w{\isacharparenright}{\kern0pt}\ {\isacharequal}{\kern0pt}\ prefix\ p\ w{\isachardoublequoteclose}.
\end{isaframe}

The execution follows examples of \Cref{sec:ex2,sec:ex3}:
first, all schematic variables of type list of the fact being reversed are instantiated by their reversals.
Before the application of the symmetry rules, bound variables need to be treated.
Let us indicate this procedure on example3 of \Cref{sec:ex3} which contains one bound variable.
As indicated above, the idea is to use the equivalence $(\exists x. \ P(x)) \leftrightarrow (\exists x. \ P(\rev{} x))$.
We introduce a helper (private) definition and 2 claims as follows:
\begin{isaframe}
\isacommand{definition}\isamarkupfalse%
\ Ex{\isacharunderscore}{\kern0pt}rev{\isacharunderscore}{\kern0pt}wrap\ {\isacharcolon}{\kern0pt}{\isacharcolon}{\kern0pt}\ {\isachardoublequoteopen}{\isacharparenleft}{\kern0pt}{\isacharprime}{\kern0pt}a\ list\ {\isasymRightarrow}\ bool{\isacharparenright}{\kern0pt}\ {\isasymRightarrow}\ bool{\isachardoublequoteclose}\isanewline
\ \ \isakeyword{where}\ {\isachardoublequoteopen}Ex{\isacharunderscore}{\kern0pt}rev{\isacharunderscore}{\kern0pt}wrap\ P\ {\isacharequal}{\kern0pt}\ {\isacharparenleft}{\kern0pt}{\isasymexists}x{\isachardot}{\kern0pt}\ P\ {\isacharparenleft}{\kern0pt}rev\ x{\isacharparenright}{\kern0pt}{\isacharparenright}{\kern0pt}{\isachardoublequoteclose}\isanewline
\isanewline
\isacommand{lemma}\isamarkupfalse%
\ Ex{\isacharunderscore}{\kern0pt}rev{\isacharunderscore}{\kern0pt}wrapI{\isacharcolon}{\kern0pt}\ {\isachardoublequoteopen}{\isasymexists}x{\isachardot}{\kern0pt}\ P\ x\ {\isasymequiv}\ Ex{\isacharunderscore}{\kern0pt}rev{\isacharunderscore}{\kern0pt}wrap\ P{\isachardoublequoteclose}\isanewline
\isanewline
\isacommand{lemma}\isamarkupfalse%
\ Ex{\isacharunderscore}{\kern0pt}rev{\isacharunderscore}{\kern0pt}wrapE{\isacharcolon}{\kern0pt}\ {\isachardoublequoteopen}Ex{\isacharunderscore}{\kern0pt}rev{\isacharunderscore}{\kern0pt}wrap\ {\isacharparenleft}{\kern0pt}{\isasymlambda}x{\isachardot}{\kern0pt}\ P\ x{\isacharparenright}{\kern0pt}\ {\isasymequiv}\ {\isasymexists}x{\isachardot}{\kern0pt}\ P\ {\isacharparenleft}{\kern0pt}rev\ x{\isacharparenright}{\kern0pt}{\isachardoublequoteclose}\isanewline
\end{isaframe}

The application of these claims can be seen as
\begin{isaframe}
\ example{\isadigit{3}}{\isacharbrackleft}{\kern0pt}of\ {\isachardoublequoteopen}rev\ u{\isachardoublequoteclose}\ {\isachardoublequoteopen}rev\ p{\isachardoublequoteclose}\ {\isachardoublequoteopen}rev\ w{\isachardoublequoteclose}\ {\isachardoublequoteopen}rev\ q{\isachardoublequoteclose}{\isacharcomma}{\kern0pt}unfolded\ Ex{\isacharunderscore}{\kern0pt}rev{\isacharunderscore}{\kern0pt}wrapI{\isacharbrackright}{\kern0pt},
\end{isaframe}
which yields
\begin{isaframe}
\ {\isachardoublequoteopen}prefix\ {\isacharparenleft}{\kern0pt}rev\ u{\isacharparenright}{\kern0pt}\ {\isacharparenleft}{\kern0pt}rev\ p\ {\isacharat}{\kern0pt}\ rev\ w\ {\isacharat}{\kern0pt}\ rev\ q{\isacharparenright}{\kern0pt}\ {\isasymLongrightarrow} 
length\ {\isacharparenleft}{\kern0pt}rev\ p{\isacharparenright}{\kern0pt}\ {\isasymle}\ length\ {\isacharparenleft}{\kern0pt}rev\ u{\isacharparenright}{\kern0pt}\ {\isasymLongrightarrow}\isanewline
length\ {\isacharparenleft}{\kern0pt}rev\ u{\isacharparenright}{\kern0pt}\ {\isasymle}\ length\ {\isacharparenleft}{\kern0pt}rev\ p\ {\isacharat}{\kern0pt}\ rev\ w{\isacharparenright}{\kern0pt}\ {\isasymLongrightarrow} 
Ex{\isacharunderscore}{\kern0pt}rev{\isacharunderscore}{\kern0pt}wrap\ {\isacharparenleft}{\kern0pt}{\isasymlambda}r{\isachardot}{\kern0pt}\ rev\ u\ {\isacharequal}{\kern0pt}\ rev\ p\ {\isacharat}{\kern0pt}\ r\ {\isasymand}\ prefix\ r\ {\isacharparenleft}{\kern0pt}rev\ w{\isacharparenright}{\kern0pt}{\isacharparenright}{\kern0pt}{\isachardoublequoteclose}.
\end{isaframe}
The next step is
\begin{isaframe}
example{\isadigit{3}}{\isacharbrackleft}{\kern0pt}of\ {\isachardoublequoteopen}rev\ u{\isachardoublequoteclose}\ {\isachardoublequoteopen}rev\ p{\isachardoublequoteclose}\ {\isachardoublequoteopen}rev\ w{\isachardoublequoteclose}\ {\isachardoublequoteopen}rev\ q{\isachardoublequoteclose}{\isacharcomma}{\kern0pt}unfolded\ Ex{\isacharunderscore}{\kern0pt}rev{\isacharunderscore}{\kern0pt}wrapI{\isacharcomma}{\kern0pt}\ unfolded\ Ex{\isacharunderscore}{\kern0pt}rev{\isacharunderscore}{\kern0pt}wrapE{\isacharbrackright}{\kern0pt}
\end{isaframe}
resulting in
\begin{isaframe}
{\isachardoublequoteopen}prefix\ {\isacharparenleft}{\kern0pt}rev\ u{\isacharparenright}{\kern0pt}\ {\isacharparenleft}{\kern0pt}rev\ p\ {\isacharat}{\kern0pt}\ rev\ w\ {\isacharat}{\kern0pt}\ rev\ q{\isacharparenright}{\kern0pt}\ {\isasymLongrightarrow} 
length\ {\isacharparenleft}{\kern0pt}rev\ p{\isacharparenright}{\kern0pt}\ {\isasymle}\ length\ {\isacharparenleft}{\kern0pt}rev\ u{\isacharparenright}{\kern0pt}\ {\isasymLongrightarrow}\isanewline
length\ {\isacharparenleft}{\kern0pt}rev\ u{\isacharparenright}{\kern0pt}\ {\isasymle}\ length\ {\isacharparenleft}{\kern0pt}rev\ p\ {\isacharat}{\kern0pt}\ rev\ w{\isacharparenright}{\kern0pt}\ {\isasymLongrightarrow}\ {\isasymexists}r{\isachardot}{\kern0pt}\ rev\ u\ {\isacharequal}{\kern0pt}\ rev\ p\ {\isacharat}{\kern0pt}\ rev\ r\ {\isasymand}\ prefix\ {\isacharparenleft}{\kern0pt}rev\ r{\isacharparenright}{\kern0pt}\ {\isacharparenleft}{\kern0pt}rev\ w{\isacharparenright}{\kern0pt}{\isachardoublequoteclose}.
\end{isaframe}
Note that the name of the bound variable is preserved in this step.
It is due to \isaterm{{\isacharparenleft}{\kern0pt}{\isasymlambda}x{\isachardot}{\kern0pt}\ P\ x{\isacharparenright}{\kern0pt}} being present in 
\isaterm{Ex{\isacharunderscore}{\kern0pt}rev{\isacharunderscore}{\kern0pt}wrapE} rather than just \isaterm{P}.

This two step rewriting using the definition \isaterm{Ex\_rev\_wrap} in the intermediate step is to prevent an infinite loop of rewriting while trying to go directly from 
\isaterm{{\isasymexists}x{\isachardot}{\kern0pt}\ P\ x}
to
\isaterm{{\isasymexists}x{\isachardot}{\kern0pt}\ P\ {\isacharparenleft}{\kern0pt}rev\ x{\isacharparenright}}.

The last form is ready for the application of symmetry rules, and we almost obtain our goal, example3\_sym.
The remaining difference is the order of application of \isaterm{\isacharat}, i.e., the arrangement of parentheses.
The operation \isaterm{\isacharat} is associative and this final adjustment is left to be done manually, if desirable.

The implementation deals with other types of bound variables in a similar manner using a definition analogous to \isaterm{Ex\_rev\_wrap} and its two associated wrapping and unwrapping rules.
In a similar spirit, a special care for the constructors \Nil{} and \Cons{} is also part of the reversing process.
The described implementation is available at \cite{RevSymArch_gitlab}.

\section{Limits of the approach} \label{sec:limits}

To show the current limits, consider the following pair of symmetric claims:

\begin{isaframe}
\isacommand{lemma}\isamarkupfalse%
\ example{\isadigit{4}}{\isacharcolon}{\kern0pt}\ {\isachardoublequoteopen}prefix\ ps\ ws\ {\isasymLongrightarrow}\ prefix\ {\isacharparenleft}{\kern0pt}concat\ ps{\isacharparenright}{\kern0pt}\ {\isacharparenleft}{\kern0pt}concat\ ws{\isacharparenright}{\kern0pt}{\isachardoublequoteclose}
\isanewline \isanewline
\isacommand{lemma}\isamarkupfalse%
\ example{\isadigit{4}}{\isacharunderscore}{\kern0pt}sym{\isacharcolon}{\kern0pt}\ {\isachardoublequoteopen}suffix\ ps\ ws\ {\isasymLongrightarrow}\ suffix\ {\isacharparenleft}{\kern0pt}concat\ ps{\isacharparenright}{\kern0pt}\ {\isacharparenleft}{\kern0pt}concat\ ws{\isacharparenright}{\kern0pt}{\isachardoublequoteclose}\
\end{isaframe}

Applying the attribute reversed on example{\isadigit{4}} produces:

\begin{isaframe}
{\isachardoublequoteopen}suffix\ ps\ ws\ {\isasymLongrightarrow}\ prefix\ {\isacharparenleft}{\kern0pt}concat\ {\isacharparenleft}{\kern0pt}rev\ ps{\isacharparenright}{\kern0pt}{\isacharparenright}{\kern0pt}\ {\isacharparenleft}{\kern0pt}concat\ {\isacharparenleft}{\kern0pt}rev\ ws{\isacharparenright}{\kern0pt}{\isacharparenright}{\kern0pt}{\isachardoublequoteclose}
\end{isaframe}

The problem here is that we are dealing with variables of type \alistlist{}, representing factorizations or decomposition of words.
As they are of type \blist{}, the reversing happens only on this level, whereas to produce example4\_sym one would need the reversing to act on \alist{} = 
\isaterm{\isacharprime{}b}.
Namely, the additional required action of the symmetry on \isaterm{ps} and \isaterm{ws} is the application of \isaterm{map rev}.
The reason that this represents a current limit is that the choice of correct reversal rules for variables of type \alistlist{} becomes crucial and it is no more clear what are the correct reversal rules.

\section{Concluding remarks} \label{sec:conclusion}

Although the implemented attribute seems to be very simple, together with many delicately selected reversal rules it is very useful in our current project of formalization of Combinatorics on Words \cite{CoW_gitlab}.

As the attribute is a part of a living project, and the time period between the acceptance and publication of this article was noticeable, the obstacle exhibited in the previous section has been already surmounted in a way to suit the needs of the project.
However, the goal to properly deal with variables of any type remains.
In order to do that, our tentative model of the symmetry in question needs to be generalized and validated.

\subsubsection*{Acknowledgements}

The authors acknowledge support by the Czech Science Foundation grant GA\v CR 20-20621S.

\bibliographystyle{alpha}
\bibliography{../CICM2021/cicm2021}

\end{document}